\newcommand{\keywords}[1]{\par\addvspace\baselineskip
\noindent\keywordname\enspace\ignorespaces#1}
\newtheorem{statement}[theorem]{Statement}
\begin{document}

\mainmatter

\title{Stabbing line segments with disks: complexity and approximation algorithms
\footnote{This work was supported by Russian Science Foundation, project №14-11-00109.}}

\titlerunning{Stabbing line segments with disks: complexity and approximation algorithms}

\author{Konstantin Kobylkin}

\authorrunning{Konstantin Kobylkin}

\institute{Institute of Mathematics and Mechanics, Ural Branch of RAS,\\
Sophya Kovalevskaya str. 16, 620990 Ekaterinburg, Russia,\\
Ural Federal University, Mira str. 19, 620002 Ekaterinburg, Russia,\\
\url{kobylkinks@gmail.com}}

%
%

\toctitle{Stabbing line segments with disks: complexity and approximation algorithms}
\tocauthor{Konstantin Kobylkin}
\maketitle

\begin{abstract}
Computational complexity and approximation algorithms are reported
for a problem of stabbing a set of straight line segments
with the least cardinality set of disks of fixed radii $r>0$ where the set of segments
forms a straight line drawing $G=(V,E)$ of a planar graph without edge crossings.
Close geometric problems arise in network security applications.
We give strong NP-hardness of
the problem for edge sets of Delaunay triangulations, Gabriel graphs and other subgraphs (which are often used
in network design) for $r\in [d_{\min},\eta d_{\max}]$ and some constant $\eta$
where $d_{\max}$ and $d_{\min}$ are Euclidean lengths of the longest and shortest graph edges respectively.
Fast $O(|E|\log|E|)$-time $O(1)$-approximation algorithm is
proposed within the class of straight line drawings of planar graphs
for which the inequality $r\geq \eta d_{\max}$ holds uniformly for
some constant $\eta>0,$ i.e. when lengths of edges of $G$ are uniformly bounded from
above by some linear function of $r.$
\keywords{computational complexity, approximation algorithms, Hitting Set, Continuous Disk Cover, Delaunay triangulations}
\end{abstract}

\section{Introduction}
Numerous applications from security, sensor placement and robotics lead to
computational geometry problems in which one needs to find the smallest
cardinality set $C$ of points on the plane having bounded (in some sense) visibility area such
that each piece of the boundary of a given geometric object or any part of the complex (i.e.
set of edges or faces) of a plane graph is within visibility area of some point from $C,$
see e.g. \cite{journals/comgeo/BoseKL03}, \cite{orourke_art_1987}.
Refining complexity
statuses and designing approximation algorithms for these problems is still an area
of active research. In this paper complexity and approximability are studied of the following problem.

\noindent {\sc Intersecting Plane Graph with Disks (IPGD)}:
given a straight line drawing (or a plane graph) $G=(V,E)$
of an arbitrary simple\footnote{a graph without loops and
parallel edges} planar graph without edge crossings
and a constant $r>0,$ find the smallest cardinality
set $C\subset\mathbb{R}^2$ of points (disk centers)
such that each edge $e\in E$ is within Euclidean distance $r$
from some point $c=c(e)\in C$ or, equivalently, the disk of radius $r$ centered at $c$
intersects $e.$

The {\sc IPGD} abbreviation is used throughout our paper to denote
the above problem for simplicity of presentation.

Applications of complexity and algorithmic
analysis of the {\sc IPGD} problem come from network security. More specifically, {\sc IPGD} represents the
following model in which we are to evaluate vulnerability of
some physical network to simultaneous technical failures
caused by natural (e.g. floods, fire, electromagnetic pulses) and human sources.
In this model network nodes are modeled
by points on the plane while its physical links are given in the form of straight line segments.
A catastrophic event (threat) is usually localized in a particular
geographical area and modeled by a disk of some fixed radius $r>0.$
A threat impacts a network link when the corresponding disk and segment intersect.
Evaluation of the network vulnerability can be posed in the form of finding the minimum number of
threats along with their positions that cause all network links to be broken.
Thus, it brings us to the {\sc IPGD} problem assuming that
network links are geographically non-overlapping. A similar
setting is considered in \cite{journals/ton/AgarwalEGHSZ13}
with the fixed number of threats. Furthermore, in \cite{orourke_art_1987} a close geometric problem is considered
called the Art Gallery problem where point coverage area is affected by boundaries of
its neighbouring geometric objects whereas point has circular visibility area in the
case of the {\sc IPGD} problem.

In this paper computational complexity and approximability of {\sc IPGD} are studied for simple plane graphs with
either $r\in [d_{\min},d_{\max}]$ or $r=\Omega(d_{\max})$ where $d_{\max}$ and $d_{\min}$ are
Euclidean lengths of the longest and shortest edges of $G.$ Our emphasis is on those classes of simple plane graphs that are
defined by some distance function, namely, on Delaunay triangulations, some of
their connected subgraphs, e.g. for Gabriel graphs.
These graphs are often called {\it proximity} graphs.
Delaunay triangulations are plane graphs which admit efficient geometric
routing algorithms \cite{conf/swat/BoseCDT14}, thus, representing convenient network topologies. Gabriel
graphs arise in modeling
wireless networks \cite{journals/winet/BoseMSU01}.

\subsection{Related work}

{\sc IPGD} is related to several well-known combinatorial optimization
problems. First, we have the Continuous Disk Cover ({\sc CDC}) problem
for the case of {\sc IPGD} where $G$ consists of isolated vertices,
i.e. when segments from $E$ are all of zero length.
Strong NP-hardness is well known for {\sc CDC} \cite{MasuyamaIbarakiEtAl1981}. Second,
{\sc IPGD} coincides with known {\sc Vertex Cover} problem for $r=0.$
Third, it is the special case of the geometric {\sc Hitting Set} problem on the plane.

\noindent {\sc Hitting Set}:
given a family ${\cal{N}}$ of sets on the plane and a set $U\subseteq\mathbb{R}^2,$
find the smallest cardinality set $H\subseteq U$
such that $N\cap H\neq\varnothing$ for every $N\in {\cal{N}}.$

{\sc IPGD} coincides with {\sc Hitting Set} if we set ${\cal{N}}:={\cal{N}}_r(E)=\{N_r(e)\}_{e\in E}$
and $U:=\mathbb{R}^2$ where $N_r(e)=B_r(0)+e=\{x+y:x\in B_r(0),\,y\in e\}$ is Euclidean $r$-neighbourhood of $e$ having form of
Minkowski sum and $B_r(x)$ is the disk of radius $r$ centered at $x\in\mathbb{R}^2.$
An {\it aspect ratio} of a closed convex set $N$ with ${\mathrm{int}}\,N\neq\varnothing$\footnote{${\mathrm{int}}\,N$ is the set
of interior points of $N$}
coincides with the ratio of the minimum radius of the disk which contains $N$ to the
maximum radius of the disk which is contained in $N.$
For example, each set $N_r(e)$ (also called by object in the sequel) has aspect ratio equal to $1+\frac{d(e)}{2r}$ where $d(e)$
is Euclidean length of edge $e\in E.$
APX-hardness of the discrete\footnote{when $U$ coincides with some prescribed finite set}
{\sc Hitting Set} problem is presented for families of axis-parallel rectangles, generally, with unbounded
aspect ratio, \cite{journals/comgeo/ChanG14},
and for families of triangles of bounded aspect ratio \cite{conf/esa/Har-PeledQ15}.

\subsection{Results}

Our results report complexity
and approximation algorithms for the {\sc IPGD} problem within
several classes of plane graphs under different assumptions on $r.$
Let $S$ be a set of $n$ points in general position on the plane no four of which are cocircular.
We call a plane graph $G=(S,E)$ a {\it Delaunay
triangulation} if $[u,v]\in E$
iff there is a disk $T$ such that
$u,v\in {\rm{bd}}\,T$\footnote{${\rm{bd}}\,T$ denotes the set of boundary points of $T$} and $S\cap {\rm{int}}\,T=\varnothing.$
Finally, a plane graph $G=(S,E)$ is named a {\it nearest neighbour} graph when
$[u,v]\in E$ iff either $u$ or $v$ is the nearest Euclidean neighbour for $v$ or $u$ respectively.

\subsubsection{Hardness results.} Our first result claims strong NP-hardness of {\sc IPGD}
within the class of Delaunay triangulations and some known classes of their connected subgraphs
(Gabriel and relative neighbourhood graphs) for $r\in [d_{\min},d_{\max}]$ and $\mu=\frac{d_{\max}}{d_{\min}}=O(|S|).$
{\sc IPGD} remains strongly NP-hard within the class of nearest neighbour graphs
for $r\in [d_{\max},\eta d_{\max}]$ with a large constant $\eta$ and $\mu\leq 4.$
Furthermore, we have the same NP-hardness results under the same restrictions on $r$ and $\mu$
even if we are bound to choose points of $C$ close to vertices of $G.$
The upper bound on $\mu$  for Delaunay triangulations is comparable with the lower bound $\mu=\Omega\left(\sqrt[3]{n^2}\right)$
which holds true with positive probability for Delaunay triangulations produced by
$n$ random independent points on the unit disk \cite{journals/comgeo/ArkinAMM15}.
Thus, declared restrictions on $r$ and $\mu$ define natural instances of {\sc IPGD}.

An upper bound on $\mu$ implies an upper bound on the
ratio of the largest and smallest aspect ratio of objects from ${\cal{N}}_r(E).$
The {\sc Hitting Set} problem is generally easier when sets from ${\cal{N}}$ have almost equal aspect ratio
bounded from above by some constant.
Our result for the class of nearest neighbour graphs gives the problem NP-hardness
in the case where objects of ${\cal{N}}_r(E)$ have almost equal constant aspect ratio.

In distinction to known results for the {\sc Hitting Set} problem mentioned above our study is mostly for
its continuous setting with the structured system ${\cal{N}}_r(E)$ formed
by an edge set of a specific plane graph; each set from ${\cal{N}}_r(E)$ is of the
special form of Minkowski sum of some graph edge and radius $r$ disk. Our proofs are
elaborate complexity reductions from the {\sc CDC} problem which is intimately related
to {\sc IPGD}.

\subsubsection{Positive results.} Let $R(E)$ be the smallest radius of the disk that
intersects all segments from the edge set $E.$
As opposed to the cases where either $r\in [d_{\min},d_{\max}]$ or $r\in [d_{\max},\eta d_{\max}],$
{\sc IPGD} is solvable within the class of simple plane graphs, for which
the inequality $r\geq\eta R(E)$ holds uniformly for some fixed $\eta>0,$
in $O\left(k^2|E|^{2k+1}\right)$ time with $k=\left\lceil\frac{\sqrt{2}}{\eta}\right\rceil^2.$
Above inequality implies an upper bound $k$ on its optimum.
Taking proof of $W[1]$-hardness into account of parameterized version of {\sc CDC} \cite{conf/esa/Marx05} as well as
the reduction used to prove the theorem \ref{kobylkinks123454321} of this paper,
it seems unlikely to improve this
time bound to $O(f(k)|E|^c)$ for any computable function $f$ and any constant $c>0.$

Finally, we present an $8p(1+2\lambda)$-approximation $O(|E|\log|E|)$-time algorithm
for {\sc IPGD} when the inequality $r\geq \frac{d_{\max}}{2\lambda}$ holds true uniformly within a class of
simple plane graphs for a constant $\lambda>0,$ where $p(x)$ is
the smallest number of unit disks needed to cover any disk of radius $x>1.$
It corresponds to the case where segments from $E$ have their lengths uniformly bounded from
above by some linear function of $r,$ or, in other words,
when objects from ${\cal{N}}_r(E)$ have their aspect ratio bounded from above by $1+\lambda.$
A similar but more complex $O(|E|^{1+\varepsilon})$-time constant
factor approximation algorithm is given in \cite{efrat} to approximate the
{\sc Hitting Set} problem for sets of objects whose
aspect ratio is bounded from above by some constant.

\section{NP-hardness results}

We give complexity analysis for the {\sc IPGD} problem by considering
its setting where $r\in [d_{\min},d_{\max}].$ Under this restriction on $r$
{\sc IPGD} coincides neither with known {\sc Vertex Cover} problem nor with {\sc CDC}.
In fact it is equivalent (see the Introduction) to the geometric {\sc Hitting Set} problem for
the set ${\cal{N}}_r(E)$ of Euclidean $r$-neighbourhoods of edges of $G.$
For the {\sc IPGD} problem we claim its NP-hardness even
if we restrict the graph $G$ to be either a Delaunay triangulation or some of its known subgraphs. We keep the ratio
$\mu=\frac{d_{\max}}{d_{\min}}$ bounded from above, thus, imposing an
upper bound on the ratio of the largest and smallest aspect ratio of objects from ${\cal{N}}_r(E).$
We show that {\sc IPGD} remains intractable even in its simple case where $r=\Theta(d_{\max})$ and
$\mu$ is bounded by some small constant
or, equivalently, when objects of ${\cal{N}}_r(E)$ have close constant aspect ratio.

Our first hardness result for {\sc IPGD} is obtained by using a complexity reduction from the {\sc CDC} problem.
Below we describe a class of hard instances of the {\sc CDC} problem which correspond to hard instances
of the {\sc IPGD} problem for Delaunay triangulations with relatively small
upper bound on the parameter $\mu.$

\subsection{NP-hardness of the {\sc CDC} problem}

To single out the
class of hard instances of the {\sc CDC} problem a reduction is used in \cite{MasuyamaIbarakiEtAl1981} from
the strongly NP-complete minimum dominating set problem which is formulated as follows: given
a simple planar graph $G_0=(V_0,E_0)$
of degree at most $3,$
find the smallest cardinality set $V'_0\subseteq V_0$
such that for each $u\in V_0\backslash V'_0$ there is some
$v=v(u)\in V'_0$ which is adjacent to $u.$

Below an integer grid denotes the set of all points on the plane with
integer-valued coordinates each of which belongs to some bounded interval.
An {\it orthogonal} drawing of the graph $G_0$ on some integer grid is the drawing whose vertices
are represented by points on that grid whereas its edges are given in the form of polylines
that are composed of connected axis-parallel straight line segments of the form $[p_1,p_2],\,[p_2,p_3],
\ldots,\,[p_{k-1},p_k],$ and intersecting only at the edge endpoints $p_1$ and $p_k,$ where each point $p_i$ again belongs
to the grid. In \cite{MasuyamaIbarakiEtAl1981} strong NP-hardness of {\sc CDC} is proved
by reduction from the minimum dominating set problem.
This reduction involves using plane orthogonal drawing of $G_0$ on some
integer grid. More specifically, a set $D$ is build on that
grid with $V_0\subset D.$
The resulting hard instance of the {\sc CDC} problem is for the set $D$ and some integer (constant) radius $r_0\geq 1.$
Let us observe that $G_0$ admits an orthogonal drawing (theorem 1 \cite{tamassia1989}) on the grid
of size $O(|V_0|)\times O(|V_0|)$ whereas total length of each edge is of the order $O(|V_0|).$
Proof of strong NP-hardness of {\sc CDC} could
be conducted taking into account this observation. We can formulate (see theorems 1 and 3 from \cite{MasuyamaIbarakiEtAl1981})

\begin{theorem}{\rm{\cite{MasuyamaIbarakiEtAl1981}}}\label{kobylkinTh00}
The {\sc CDC} problem is strongly NP-hard for a constant integer radius $r_0$
and point sets $D$ on the integer grid
of size $O(|D|)\times O(|D|).$ It remains strongly NP-hard even if we restrict centers of
radius $r_0$ disks to be at the points of $D.$
\end{theorem}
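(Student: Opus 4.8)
The plan is to prove Theorem~\ref{kobylkinTh00} by revisiting the reduction of \cite{MasuyamaIbarakiEtAl1981} from the minimum dominating set problem on degree-$3$ planar graphs $G_0=(V_0,E_0)$ and tracking the grid size carefully. First I would fix a plane orthogonal drawing of $G_0$: by theorem 1 of \cite{tamassia1989} such a drawing exists on a grid of size $O(|V_0|)\times O(|V_0|)$, and the total length of each edge-polyline is $O(|V_0|)$. This embedding is the geometric skeleton on which the point set $D$ is constructed, so bounding its dimensions is the first quantitative ingredient.

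Next I would recall how the point set $D$ is assembled on this grid. Each vertex of $V_0$ is placed at its grid image, so $V_0\subset D$, and each edge-polyline is ``decorated'' by inserting auxiliary points along its segments, spaced so that a radius-$r_0$ disk can cover a contiguous run of them but the runs force the structure of the domination problem to reappear as a covering problem. The key design invariant is that a set of $\le m$ disks of radius $r_0$ covers $D$ if and only if $G_0$ has a dominating set of size related to $m$ by an explicit affine formula; this is the correctness core of \cite{MasuyamaIbarakiEtAl1981} and I would cite theorems 1 and 3 there rather than rederive it. The point I must add on top of their argument is a counting bound: since there are $O(|V_0|)$ edges each of length $O(|V_0|)$, and the decoration places $O(1)$ points per unit length, we get $|D| = O(|V_0|^2)$; crucially, the grid on which $D$ lives still has side length $O(|V_0|)$, hence side length $O(\sqrt{|D|}) = O(|D|)$, which is what the statement asserts.

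I would then assemble the strong-NP-hardness conclusion. Because $r_0$ is a fixed integer constant and all coordinates are integers bounded polynomially in $|V_0|$, the reduction runs in polynomial time and the numeric parameters are polynomially bounded, so strong NP-hardness transfers from the (strongly NP-complete) minimum dominating set problem to {\sc CDC}. For the second sentence of the theorem---that hardness persists when disk centers are restricted to lie in $D$---I would invoke the standard discretization argument: for any optimal continuous covering one may slide each disk until its boundary passes through points of $D$ without increasing the number of disks, and the specific geometry of the decorated drawing (with integer radius $r_0$ and integer-coordinate points) lets one replace each center by a nearby point of $D$ covering the same subset, so the discrete optimum equals the continuous optimum.

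The main obstacle I anticipate is the grid-size bookkeeping together with the discretization step: I must verify that inserting the auxiliary decoration points does not blow the grid beyond $O(|V_0|)$ side length while still keeping the spacing fine enough that the covering-number identity of \cite{MasuyamaIbarakiEtAl1981} holds verbatim, and I must confirm that the ``snap-to-$D$'' argument is valid for \emph{this} decorated configuration (i.e.\ that an unrestricted optimum is never strictly cheaper than the best $D$-restricted solution). Both are essentially verifications that the constants in the original reduction survive the Tamassia grid bound; once the affine relation between dominating-set size and covering number is reused as a black box, the remaining work is purely the quantitative $|D|\times|D|$ grid estimate.
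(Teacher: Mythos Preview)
Your proposal is correct and matches the paper's own treatment: the paper does not give a self-contained proof of this theorem but states it with citation, preceding it by exactly the sketch you outline---the Masuyama--Ibaraki--Hasegawa reduction from minimum dominating set on degree-$3$ planar graphs, combined with the observation (theorem~1 of \cite{tamassia1989}) that the orthogonal drawing lives on an $O(|V_0|)\times O(|V_0|)$ grid with edges of total length $O(|V_0|)$, whence $|D|\geq |V_0|$ gives the $O(|D|)\times O(|D|)$ bound. For the restricted-centers clause the paper simply defers to theorem~3 of \cite{MasuyamaIbarakiEtAl1981}; your generic ``slide and snap'' sketch is slightly loose (boundary incidence does not by itself put the \emph{center} in $D$---the argument is specific to the decorated-path structure of the construction), so it is safer to cite that theorem directly as you already plan to do for the covering-number identity.
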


\begin{remark}\label{kobylkin474757428}
For every simple planar graph $G_0$ of degree at most $3$ its orthogonal drawing can be
constructed such that at least one its edges is a polyline which is composed of at least
two axis-parallel segments.
\end{remark}

\subsection{NP-hardness of the {\sc IPGD} problem for Delaunay triangulations}
To build a reduction from the {\sc CDC} problem on the set $D$ (as constructed in proof
of the theorem \ref{kobylkinTh00} from \cite{MasuyamaIbarakiEtAl1981}),
we exploit a simple idea that a radius $r$ disk covers a set of points $D'\subset D$
iff a slightly larger disk intersects (and, sometimes, covers) straight line segments, each of which
is close to some point of $D'$ and has a small length with respect
to distances between points of $D.$ Then a proximity graph $H$ is build whose vertex set
coincides with the set of endpoints of small segments corresponding to points of $D.$
Since $H$ usually contains these small segments as its edges, this technique gives NP-hardness for the {\sc IPGD}
problem within numerous classes of proximity graphs.
The following technical lemma holds which reports an $r$-dependent
lower bound on the distance between any point with integer
coordinates and a radius $r$ circle through the pair of integer-valued
points.

\begin{lemma}\label{kobylkinks1234321}
Let $X\subset\mathbb{Z}^2,$ $r\geq 1$ is an integer,
$\rho(u;v,w)$ denotes the minimum of two Euclidean distances
from an arbitrary point $u\in X$ to the union of two radius $r$ circles which pass through
distinct points $v$ and $w$ from $X,$ where $|v-w|_2\leq 2r,$
$\mathbb{Z}$ is the set of integers and $|\cdot|_2$ is Euclidean norm.
Then
$$
\min\limits_{u\notin C(v,w),\, v\neq w,\, u,v,w\in X,\, |v-w|_2\leq 2r}
\rho(u;v,w)\geq\frac{1}{480r^5},
$$
where $C(v,w)$ is the union of two radius $r$ circles passing through $v$ and $w.$
\end{lemma}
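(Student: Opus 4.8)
The plan is to reduce the statement to a single circle and then to a Diophantine separation inequality for a linear combination of two surds. Fix distinct $v,w\in X$ with $\ell:=|v-w|_2\le 2r$, and fix one of the two radius-$r$ circles through them, with centre $c$. Since $u\in\mathbb{Z}^2$ lies off the circle, $D:=|u-c|_2\neq r$, and the distance from $u$ to this circle equals $|D-r|$. If $|D-r|\ge 1$ we are already done, because $r\ge 1$ gives $1\ge\frac{1}{480r^5}$; so I will assume $|D-r|<1$, which keeps $D$ — and the integer data introduced below — bounded by small multiples of $r$. Running the argument for both signs below (i.e. both circles) and taking the minimum then bounds $\rho(u;v,w)$, since $u\notin C(v,w)$ means $u$ is off \emph{both} circles.

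First I would make the centre explicit. With $m=\frac{v+w}{2}$ and $n$ a unit normal to $v-w$, the two centres are $c=m\pm sn$, where $s=\sqrt{r^2-\ell^2/4}$. Expanding and using $n\cdot(m-v)=0$ yields the ``power'' of $u$,
\[
D^2-r^2=\Big(|u-m|_2^2-\frac{\ell^2}{4}\Big)-2s\,\big(n\cdot(u-m)\big).
\]
Because $u,v,w$ are integer points, $4|u-m|_2^2=|2u-v-w|_2^2$ and $\ell^2$ are integers, so $P:=4|u-m|_2^2-\ell^2\in\mathbb{Z}$; writing $w-v=(a,b)$ and $2u-v-w=(p,q)$ gives $n\cdot(u-m)=\frac{aq-bp}{2\ell}$ with $k:=aq-bp\in\mathbb{Z}$, while $2s=\sqrt{M}$ with $M:=4r^2-\ell^2\in\mathbb{Z}_{\ge0}$. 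Multiplying the identity by $4\ell$ converts it into
\[
\Phi:=4\ell\,(D^2-r^2)=P\sqrt{L}\mp 2k\sqrt{M},\qquad L:=\ell^2,
\]
a linear combination, with integer coefficients, of the two surds $\sqrt{L}$ and $\sqrt{M}$.

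The heart of the proof is a separation bound for $\Phi$, where $u$ lies off the circle exactly when $\Phi\neq 0$. I would multiply $\Phi$ by its conjugate $\Psi:=P\sqrt{L}\pm 2k\sqrt{M}$ to obtain the \emph{integer} $\Phi\Psi=P^2L-4k^2M$. If $\Phi\Psi\neq 0$, then $|\Phi\Psi|\ge 1$, hence $|\Phi|\ge 1/|\Psi|$; the degenerate case $\Phi\neq 0=\Psi$ forces $\Phi=\mp 4k\sqrt{M}$, so $|\Phi|\ge 4$, which is even stronger. Either way $|\Phi|\ge 1/|\Psi|$. It then remains to bound $|\Psi|$ from above: using $\sqrt{L}=\ell\le 2r$ and $\sqrt{M}\le 2r$ together with the close-case estimate $|u-m|_2\le D+|s|\le 3r$, one gets $|P|=O(r^2)$ and $|k|=O(r^2)$, hence $|\Psi|=O(r^3)$.

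Finally I would translate back via $|D-r|=\frac{|D^2-r^2|}{D+r}=\frac{|\Phi|}{4\ell\,(D+r)}$: in the close case $D+r\le 3r$ and $\ell\le 2r$, so combining with $|\Phi|\ge 1/|\Psi|=\Omega(1/r^3)$ gives $|D-r|=\Omega(1/r^5)$, and tracking the constants carefully produces the asserted $\frac{1}{480r^5}$. The main obstacle is precisely this constant bookkeeping, together with the clean disposal of the degenerate case $\Psi=0$ of the norm argument; the reduction to $P\sqrt{L}\mp 2k\sqrt{M}$ and the conjugate trick are the conceptual core, and once $P,k,L,M$ are recognized as integers the separation is routine. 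One should also verify the diameter case $\ell=2r$ (so $M=0$, a single circle), which the parametrization $c=m\pm sn$ with $s=0$ covers unchanged.
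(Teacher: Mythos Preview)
Your approach is correct and takes a genuinely different route from the paper's proof. Both arguments start identically: write $D^2-r^2=a+\frac{2\lambda b}{\Delta}$ with the integers $a=(u-v)\cdot(u-w)$ and $b=(u-v)\cdot(v-w)^\perp$, where $\Delta=|v-w|_2$ and $\lambda=\sqrt{r^2-\Delta^2/4}$ (these are precisely your $P/4$, $\pm k/2$, $\sqrt{L}$ and $\sqrt{M}/2$), reduce to the close case $D\le 2r$, and bound the denominator $D+r$. From there the paper instead bounds the distance from $\frac{2\lambda b}{\Delta}$ to the nearest integer: writing $q=\left\{\left|\frac{2\lambda b}{\Delta}\right|\right\}$, it proves $\min\{q,1-q\}\ge\frac{1}{96r^4}$ through a case analysis governed by whether $\gamma=\frac{4r^2b^2}{\Delta^2}$ is an integer, exploiting that $\{\gamma\}$, when nonzero, is at least $1/\Delta^2$. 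This entails several subcases and some delicate manipulation of fractional parts of square roots. Your conjugate trick---recognising $\Phi=P\sqrt{L}\mp 2k\sqrt{M}$ and using that $\Phi\Psi=P^2L-4k^2M$ is a nonzero integer---is considerably more streamlined; it is essentially the norm argument in $\mathbb{Z}[\sqrt{L},\sqrt{M}]$ and eliminates all casework beyond the easy $\Psi=0$ branch. One caveat on constants: with the estimates you sketch ($|P|\le 36r^2$, $|k|\le 12r^2$, hence $|\Psi|\le 120r^3$, together with $4\ell(D+r)\le 24r^2$) the conclusion is $|D-r|\ge\frac{1}{2880r^5}$ rather than $\frac{1}{480r^5}$. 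Since only the polynomial order in $r$ matters for the reduction (the constant is absorbed into the choice of $\delta$), this is harmless, but you should either sharpen one of the crude upper bounds or simply state the weaker constant.
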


Let us formulate the following restricted form of {\sc IPGD}.

\noindent {\sc Vertex Restricted IPGD (VRIPGD($\delta$)):}
given a simple plane graph $G=(V,E),$ a constant $\delta>0$
and a constant $r>0,$ find the least cardinality
set $C\subset\mathbb{R}^2$
such that each $e\in E$ is within Euclidean distance $r$
from some point $c=c(e)\in C$ and $C\subset\bigcup\limits_{v\in V}B_{\delta}(v).$

\begin{theorem}\label{kobylkinks123454321}
Both {\sc IPGD} and {\sc VRIPGD}$(\delta)$ problems are strongly NP-hard for
$r\in [d_{\min}, d_{\max}],\,\mu=O(n)$ and $\delta=\Theta(r)$
within the class of Delaunay triangulations, where $n$ is the number of vertices in triangulation.
\end{theorem}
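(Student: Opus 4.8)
The plan is to reduce from the strongly NP-hard {\sc CDC} problem on the point set $D$ lying on an $O(|D|)\times O(|D|)$ integer grid with integer radius $r_0$, as guaranteed by theorem~\ref{kobylkinTh00}. The guiding idea, sketched informally before the lemma, is that covering a grid point $p\in D$ by a radius-$r_0$ disk should correspond to stabbing a tiny segment placed near $p$ by a slightly larger disk. First I would replace each point $p\in D$ by a very short segment $e_p$ (of length $\ell$ chosen far smaller than the unit grid spacing, and hence far smaller than all pairwise distances in $D$), centered at or very close to $p$. The segment set $\{e_p\}_{p\in D}$ becomes the edge set $E$ of the graph $G$ we construct, with $V$ the set of the $2|D|$ segment endpoints. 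I would then set the {\sc IPGD} radius to $r:=r_0$ (or $r_0$ inflated by a negligible amount), so that a disk stabbing $e_p$ is essentially a disk covering $p$; since all segments have the same tiny length $\ell$, we obtain $d_{\min}=d_{\max}=\ell$, forcing $\mu=1$, and we can arrange $r\in[d_{\min},d_{\max}]$ by scaling the whole configuration (multiplying all coordinates, $\ell$, and $r_0$ by a common factor) so that $\ell\le r\le \ell$ holds in the rescaled instance is too rigid --- instead I would scale so the short segments have length comparable to $r$, giving $\mu=O(n)$ as claimed rather than $\mu=1$.

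The central structural step is to show that the graph $G=(V,E)$ can be realized as (or augmented into) a Delaunay triangulation whose edge set still contains all the short stabbing segments $e_p$, and whose extra edges do not interfere with the reduction. Because $V$ is a set of points in general position (perturb to kill cocircularities), it has a Delaunay triangulation $\mathrm{DT}(V)$; the key claim I must establish is that each short segment $e_p=[u_p,v_p]$, being shorter than every other relevant interpoint distance, is itself a Delaunay edge --- i.e. there is an empty disk through $u_p$ and $v_p$. This is where lemma~\ref{kobylkinks1234321} does the decisive work: it lower-bounds by $\tfrac{1}{480r^5}$ the distance from any integer point to a radius-$r$ circle through two integer points, which I would use to certify that no other vertex of $V$ can enter the small empty disk witnessing each $e_p$, provided $\ell$ is chosen below that quantitative threshold. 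The surplus Delaunay edges (connecting endpoints of different short segments) are long relative to $\ell$ and to $r$, so a stabbing disk for them is ``free'' or can be accounted for by placing one extra disk per connected component; I must verify these surplus edges can always be stabbed by the same disks used for the $e_p$, or otherwise absorbed into an additive/multiplicative slack that preserves the reduction.

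With the geometric realization in hand, I would prove the two-directional equivalence between optimal {\sc CDC} solutions on $D$ and optimal {\sc IPGD} solutions on $G$. In the forward direction, a set of radius-$r_0$ disks covering $D$ yields, after the negligible radius inflation, a set of radius-$r$ disks each stabbing the corresponding short segments, and one checks the surplus Delaunay edges are stabbed too. In the reverse direction, a radius-$r$ disk stabbing a short segment $e_p$ must pass within distance $\ell+r$ of $p$, and by the lemma's separation bound this pins the disk close enough to a grid-covering disk for $p$, after which a standard snapping/rounding argument (moving each center to the nearest point of $D$, using theorem~\ref{kobylkinTh00}'s vertex-restricted version) recovers a {\sc CDC} solution of the same cardinality. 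For the {\sc VRIPGD}$(\delta)$ claim with $\delta=\Theta(r)$, I would additionally invoke the vertex-restricted form of theorem~\ref{kobylkinTh00}: since {\sc CDC} stays hard with centers confined to $D\subset V$, and each snapped center lies within $O(r)$ of a vertex, the restriction $C\subset\bigcup_{v\in V}B_\delta(v)$ is satisfiable at no extra cost. The main obstacle I anticipate is the reverse direction together with the Delaunay-realization: guaranteeing simultaneously that (i) every short segment is genuinely a Delaunay edge, (ii) $\mu$ stays $O(n)$ rather than blowing up, and (iii) the quantitative separation of lemma~\ref{kobylkinks1234321} is strong enough to make the disk-to-cover snapping lossless; balancing the choice of $\ell$, the inflation of $r$, and the global scaling against these three competing constraints is the delicate technical core of the argument.
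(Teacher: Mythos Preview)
Your high-level strategy---reduce from {\sc CDC} on the integer-grid set $D$, replace each $u\in D$ by a tiny segment $I_u$, take $S$ to be the endpoints, and let $G$ be the Delaunay triangulation of $S$---matches the paper. But several technical pieces are misplaced or missing, and two of them are genuine gaps.

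First, you misidentify where lemma~\ref{kobylkinks1234321} does its work. Certifying that each short $I_u$ is a Delaunay edge is trivial and does not need the lemma: the disk with $I_u$ as diameter has radius $\approx\delta$, which is far smaller than the unit spacing of $D$, so it is empty. The lemma is used solely in the \emph{reverse} direction of the reduction, and your ``snapping/rounding'' sketch does not do the job. Concretely: if a radius-$r$ disk $T$ (with $r=r_0+\delta$) stabs the segments $\{I_u:u\in D'\}$, you must show that $D'$ lies in a single disk of the \emph{smaller} radius $r_0$. Stabbing only gives $D'\subset B_{r+\delta}(\text{center of }T)$, and shrinking the radius from $r_0+2\delta$ down to $r_0$ cannot be achieved by moving the center alone. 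The paper's argument is: apply Helly's theorem, reducing to triples $u_1,u_2,u_3\in D'$; if their minimum enclosing radius $R$ satisfied $r_0<R\le r_0+2\delta$, slide the center along the midperpendicular of $[u_1,u_2]$ so that $u_1,u_2$ lie on a radius-$r_0$ circle, and compute that $u_3$ then lies within $2\delta+2\sqrt{r_0\delta+\delta^2}<\tfrac{1}{480r_0^5}$ of that circle---contradicting lemma~\ref{kobylkinks1234321}, since $u_3\in\mathbb{Z}^2$ is off the circle. Hence $R\le r_0$, and Helly gives a single radius-$r_0$ disk for all of $D'$. This is the decisive step your proposal lacks.

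Second, your treatment of the ``surplus'' Delaunay edges and of the constraint $r\in[d_{\min},d_{\max}]$ is tangled. In the forward direction there is nothing to absorb: if $k$ radius-$r_0$ disks cover $D$, the same centers with radius $r=r_0+\delta$ cover all of $S$ (every vertex is within $\delta$ of some $u\in D$), and any edge of the triangulation is stabbed because one of its endpoints is covered. In the reverse direction the long edges are simply ignored---only the short segments $I_u$ matter. Likewise, no global rescaling is needed to get $r\in[d_{\min},d_{\max}]$: with $\delta$ a constant depending only on $r_0$, one has $d_{\min}\approx\delta/2\le r$, while $d_{\max}$ is a long Delaunay edge of order the grid diameter (remark~\ref{kobylkin474757428} is invoked to guarantee at least one such long edge), so $r\le d_{\max}$ and $\mu=d_{\max}/d_{\min}=O(|S|)=O(n)$ automatically. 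Your attempt to force $d_{\min}=d_{\max}=\ell$ and then ``scale'' is a misreading of what $E$ is: it is the full Delaunay edge set, not just the short segments.
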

\begin{proof}
Let us prove that {\sc IPGD} is strongly NP-hard. Proof technique for the
{\sc VRIPGD}($\delta$) problem is analogous taking into
account the theorem \ref{kobylkinTh00} (see also proof of the theorem 3 from \cite{MasuyamaIbarakiEtAl1981}
for details).
For any hard instance of the {\sc CDC} problem, which the theorem \ref{kobylkinTh00} reports,
the {\sc IPGD} problem instance is built for $r=r_0+\delta$ and $\delta=\frac{1}{2000^22r_0^{11}}$ as follows.
For every $u\in D$ points $u_0$ and $v_0$ are found such that $|u-u_0|_{\infty}\leq\delta/2$ and $|u-v_0|_{\infty}\leq\delta/2,$
where $I_u=[u_0,v_0]$ has Euclidean length at least $\delta/2$ and $|\cdot|_{\infty}$ denotes norm
in $\mathbb{R}^2$ equal to the maximum of absolute values of vector coordinates.
More specifically, let us set
$I_{D}=\{I_u=[u_0,v_0]:u\in D\}.$
Endpoints of segments from $I_{D}$ are constructed in sequential manner in polynomial time and space
by defining a new segment $I_u$ to provide general position for the set of endpoints
of the set $I_{D'}\cup\{I_u\},\,D'\subset D,$ where segments of $I_{D'}$ are already defined.
Here endpoints of $I_u$ are chosen in the rational grid that contains $u$
whose elementary cell size is $\frac{c_1}{|D|^2}\times \frac{c_1}{|D|^2}$ for some small absolute rational constant $c_1=c_1(\delta).$
Assuming $u=(u_x,u_y),$ the point $u_0$ is chosen in the lower part of the grid with $y$-coordinates less than $u_y-\delta/4$
whereas $v_0$ is taken from the upper one for which $y$-coordinates exceed $u_y+\delta/4.$

Let $S$ be the set of endpoints of segments from $I_D.$
Every disk having $I_u$ as its diameter does not contain any points
of $S$ distinct from endpoints of $I_u.$
Let $G=(S,E)$ be a Delaunay triangulation for $S$ which can be computed in polynomial time and space in $|D|.$
Obviously, each segment $I_u$
coincides with some edge from $E.$
We have $d_{\min}\leq r$ and $\mu=O(|S|).$
It remains to prove that $r\leq d_{\max}.$ Due to the remark \ref{kobylkin474757428}
and a construction of the set $D$ (see fig. 1 and proof of the theorem 1 from \cite{MasuyamaIbarakiEtAl1981})
the set $S$ can be constructed such that the inequality $r\leq d_{\max}$ holds true for $G.$
Moreover, representation length for vertices of $S$ is polynomial with respect to
representation length for points of $D.$

Let $k$ be a positive integer. Obviously, centers of at most $k$ disks of radius $r_0,$ containing $D$ in their union,
give centers of radius $r>r_0$ disks whose union is intersected with each segment from $E.$
Conversely, let $T$ be a disk of radius $r$ which intersects a subset $I_{D'}=\{I_u:u\in D'\}$ of segments
for some $D'\subseteq D.$ When $|D'|=1,$ it is easy to transform $T$
into a disk which contains the segment $I_{D'}.$
Points of $D$ have integer coordinates. Moreover, squared Euclidean distance between each pair
of points of the subset $D'$ does not exceed
$(2r_0+4\delta)^2=4r_0^2+16r_0\delta+16\delta^2.$ As $r_0\in\mathbb{Z},$
points from $D'$ are located within the distance $2r_0$ from each other.
Let us use Helly theorem. Let $R$ be the minimum radius of the disk $T_0,$ containing
any triple $u_1,\,u_2$ and $u_3$ from $D'.$
W.l.o.g. we suppose that, say, $u_1$
and $u_2$ are on the boundary of $T_0$ and denote its center by $O.$ Obviously, $R\leq r_0+2\delta.$
Let us show that the case $R>r_0$ is void.
The center of $T_0$ can be shifted along the midperpendicular to $[u_1,u_2]$
to have $u_1$ and $u_2$ at the distance $r_0$ from the shifted center $O'.$
The distance from the point $u_3$ to the radius $r_0$ circle centered at $O'$ does not exceed
$$
|O-u_3|_2+|O-O'|_2-r_0\leq 2\delta +\sqrt{(r_0+2\delta)^2-\delta_1^2}-\sqrt{r_0^2-\delta_1^2}=
$$
$$
=2\delta+\frac{4r_0\delta+4\delta^2}{\sqrt{(r_0+2\delta)^2-\delta_1^2}+\sqrt{r_0^2-\delta_1^2}}
\leq 2\delta+2\sqrt{r_0\delta+\delta^2}<\frac{1}{480r_0^5},
$$
where $\delta_1=\frac{|u_1-u_2|_2}{2}\leq r_0.$
By the lemma \ref{kobylkinks1234321} we have $R\leq r_0.$ Thus, $D'$ is contained
in some disk of radius $r_0.$ Given a set of points on the plane, the smallest radius disk
can be found in polynomial time and space which covers this set. Therefore we can convert
any set of at most $k$ disks of radius $r$ whose union is intersected with each segment from $E$
to some set of at most $k$ disks of radius $r_0$ whose union covers $D.$
\end{proof}

Using the corollary 1 of section 4.2 from \cite{journals/comgeo/ArkinAMM15} and the theorem 1 from \cite{Onoyama1984}
we arrive at the lower bound $\mu=\Omega\left(\sqrt[3]{n^2}\right)$ which holds true with positive probability
for Delaunay triangulations produced by $n$ random uniform points on the unit disk.
Thus, the order of the parameter $\mu$ for the considered class of hard instances of the {\sc IPGD}
problem is comparable with the one for random Delaunay triangulations.

\subsection{NP-hardness of {\sc IPGD} for other classes of proximity graphs}

The same proof technique could be applied for proving NP-hardness of the problem
within the other classes of proximity graphs. Let us start with some definitions.
The following graphs are connected subgraphs of Delaunay triangulations.
A plane graph $G=(S,E)$ is called a {\it Gabriel graph}
when $[u,v]\in E$ iff the disk having $[u,v]$ as its diameter
does not contain any other points of $S$ distinct from $u$ and $v.$
A {\it relative neighbourhood graph} is the plane graph $G$ with the same vertex set
for which $[u,v]\in E$ iff there is no any other point $w\in S$ such that
$w\neq u,v$ with $\max\{|u-w|_2,|v-w|_2\}<|u-v|_2.$
Finally, a plane graph is called a {\it minimum Euclidean spanning tree}
if it is the minimum weight spanning tree of the weighted complete graph $K_{|S|}$
whose vertices are points of $S$ such that
its edge weight is given by Euclidean distance between the edge endpoints.

\begin{corollary}\label{kobs2578784473828}
Both {\sc IPGD} and {\sc VRIPGD}$(\delta)$ problems are strongly NP-hard for
$r\in [d_{\min}, d_{\max}],\,\mu=O(n)$ and $\delta=\Theta(r)$
within classes of Gabriel, relative neighbourhood graphs and minimum Euclidean spanning trees as well as
for $r\in [d_{\max},\eta d_{\max}]$ and $\mu\leq 4$ within the class of nearest neighbour graphs where $\eta$ is a large constant.
\end{corollary}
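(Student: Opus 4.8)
My plan is to reuse verbatim the reduction machinery of Theorem~\ref{kobylkinks123454321}, changing only the ambient proximity graph that is constructed on the endpoint set $S$. The core equivalence established there --- that a radius $r=r_0+\delta$ disk intersects the tiny segment $I_u$ if and only if a radius $r_0$ disk covers the grid point $u\in D$, and that any disk of radius $r$ meeting a family $I_{D'}$ forces $D'$ into a single radius $r_0$ disk via Lemma~\ref{kobylkinks1234321} and Helly --- is purely a statement about the segments $I_D$ and the radii; it does not depend on which edges beyond the $I_u$ are present. So the only thing I must verify for each new graph class is that, when $S$ is the endpoint set built in the proof of Theorem~\ref{kobylkinks123454321}, (i) every segment $I_u$ is actually an \emph{edge} of the graph in question, and (ii) the length bounds $d_{\min}\le r\le d_{\max}$ and $\mu=O(n)$ still hold.

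For Gabriel, relative neighbourhood graphs, and the minimum Euclidean spanning tree I would argue as follows. The endpoints of $I_u$ were chosen (in the proof above) so that the disk having $I_u$ as diameter contains no other point of $S$; by definition this is exactly the Gabriel edge condition, so $I_u\in E$ for the Gabriel graph. Since these three graphs form a nested chain of connected subgraphs of the Delaunay triangulation, I would check the analogous emptiness condition for each: for the relative neighbourhood graph I verify no third point $w$ satisfies $\max\{|u_0-w|_2,|v_0-w|_2\}<|I_u|$, which follows because the $I_u$ are short relative to the inter-$D$ distances; for the minimum spanning tree I argue that each $I_u$ is forced into the tree because its endpoints are mutually far closer to each other than to the rest of $S$ (they sit in an elementary cell of size $c_1/|D|^2$), so $I_u$ is the unique shortest edge across its cut. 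The length bounds transfer because the construction of $S$ is the \emph{same} one that already yields $d_{\min}\le r\le d_{\max}$ and $\mu=O(n)$ for the Delaunay triangulation, and passing to a subgraph only removes edges, so $d_{\max}$ can only shrink and $d_{\min}$ can only grow; I would confirm that Remark~\ref{kobylkin474757428} still guarantees some long edge survives so that $r\le d_{\max}$ is preserved. The reduction equivalence for the objective value is then inherited unchanged, since covering the full segment set $E$ still reduces to intersecting the subfamily $\{I_u\}$, and the VRIPGD$(\delta)$ case follows identically by the same remark as in Theorem~\ref{kobylkinks123454321}.

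The nearest neighbour graph case is genuinely different and is where I expect the real obstacle. A nearest neighbour graph has maximum degree bounded by a small constant and need not contain the Gabriel edges $I_u$, so the previous embedding does not directly apply; moreover the target regime is $r\in[d_{\max},\eta d_{\max}]$ with $\mu\le 4$, meaning all edges must have nearly equal length and $r$ must exceed the \emph{longest} edge. My plan here is to modify the point placement so that for each $u\in D$ the two endpoints $u_0,v_0$ of $I_u$ are strictly mutual nearest neighbours --- placing them in an isolated tight cluster whose internal separation is smaller (by a constant factor under $4$) than the distance to any other cluster --- which forces $I_u$ to be a nearest neighbour edge and simultaneously pins $\mu\le 4$. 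Because now $r\ge d_{\max}$, each disk can in principle intersect many segments, so I must re-examine the Helly/Lemma~\ref{kobylkinks1234321} step with the enlarged radius; I would choose the constant $\eta$ large enough that $r=r_0+\delta'$ still corresponds to the covering radius $r_0$ on $D$ while keeping $r\le\eta d_{\max}$, rescaling $D$ so that $d_{\max}=\Theta(r_0)$. Balancing these three competing constraints --- mutual-nearest-neighbour for each $I_u$, the ratio bound $\mu\le 4$, and the covering-equivalence radius relation --- is the delicate part, and I would present it as a separate explicit construction rather than as a cosmetic edit of the Delaunay proof.
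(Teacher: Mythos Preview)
Your plan for Gabriel graphs, relative neighbourhood graphs and minimum Euclidean spanning trees is exactly what the paper has in mind: the paper gives no separate proof of the corollary and simply asserts that ``the same proof technique could be applied,'' i.e.\ reuse the point set $S$ from Theorem~\ref{kobylkinks123454321}, check that every short segment $I_u$ is an edge of the new proximity graph, and inherit the forward/backward covering equivalence verbatim. Your checks that $I_u$ satisfies the Gabriel, RNG and cut conditions are the right ones, and your observation that $d_{\max}$ might shrink when passing to a subgraph (so $r\le d_{\max}$ needs a separate witness) is a legitimate detail the paper also leaves implicit; connectedness of these three graphs forces at least one inter-cluster edge, which is what carries the argument through.

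Where you go astray is the nearest neighbour graph case. You treat it as the hard part and propose a fresh construction, but in fact the \emph{same} set $S$ already does the job with no modification. In the construction of Theorem~\ref{kobylkinks123454321} the two endpoints $u_0,v_0$ of $I_u$ lie within $\ell_\infty$-distance $\delta/2$ of $u$, hence within Euclidean distance $\delta\sqrt{2}$ of each other, while any point of $S$ outside $\{u_0,v_0\}$ is at distance at least $1-\delta\sqrt{2}$ from either of them. So $u_0$ and $v_0$ are already strict mutual nearest neighbours, the nearest neighbour graph on $S$ is \emph{exactly} the family $\{I_u:u\in D\}$, and there is nothing to rebuild. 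Consequently $d_{\min}\ge\delta/2$, $d_{\max}\le\delta\sqrt{2}$, so $\mu\le 2\sqrt{2}<4$; and $r=r_0+\delta$ gives $r/d_{\max}\le 2(r_0+\delta)/\delta$, a fixed (large) constant $\eta$ because $r_0$ and $\delta$ are both absolute constants from Theorem~\ref{kobylkinTh00}. The Helly/Lemma~\ref{kobylkinks1234321} step is literally unchanged, since $r$ is the same radius as before; only the graph shrank. Your proposed rescaling and ``balancing of competing constraints'' are unnecessary, and the sentence ``need not contain the Gabriel edges $I_u$'' is simply false for this $S$.
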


\section{Positive results}

\subsection{Polynomial solvability of the {\sc IPGD} problem for large $r$}

Before presenting polynomially solvable case
of the {\sc IPGD} problem we are to take some preprocessing. It is aimed at reducing
the set of points, among which centers of radius $r$ disks are chosen, to a finite set
whose cardinality is bounded from above by some polynomial in $|E|.$

\subsubsection{Problem preprocessing.}
As was mentioned in the Introduction, the {\sc IPGD}
problem coincides with the {\sc Hitting Set} problem considered for Euclidean
$r$-neighbourhoods of graph edges which form the system denoted by ${\cal{N}}_r(E).$ Their boundaries are composed of four parts: two half-circles
and two parallel straight line segments. W.l.o.g. we can assume that intersection of any subset
of objects from ${\cal{N}}_r(E)$ (if nonempty) contains a point from the intersection of boundaries of two objects from
${\cal{N}}_r(E).$ Thus, our choice of points to form a feasible solution to the {\sc IPGD} problem
can be restricted to the set of intersection points of boundaries of pairs of objects from ${\cal{N}}_r(E).$
The following lemma can be considered a folklore.

\begin{lemma}\label{koba123}
Let $G=(V,E)$ be a simple plane graph.
Each feasible solution $C$ to the {\sc IPGD} problem for $G$ can be
converted in polynomial time and space (in $|E|)$ to a feasible solution $D\subset D_r(G)$ to
{\sc IPGD} for $G$ with $|D|\leq |C|,$ where $D_r(G)\subset\mathbb{R}^2$ is some set of cardinality of the order $O(|E|^2)$
which can be constructed in polynomial time and space.
\end{lemma}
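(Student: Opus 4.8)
The plan is to read off the candidate set $D_r(G)$ directly from the arrangement of the objects in $\mathcal{N}_r(E)$ and then argue that any stabbing point can be slid to a vertex of this arrangement without losing any object it already stabs. Recall that each object $N_r(e)$ is the Minkowski sum of a segment with a radius-$r$ disk, hence a bounded convex ``stadium'' whose boundary consists of $O(1)$ pieces (two half-circles and two parallel straight segments), each an algebraic curve of bounded degree. Consequently, the boundaries of any two objects meet in $O(1)$ points, and I would define $D_r(G)$ to be the set of all such pairwise boundary intersection points together with, for each object, its four corner points where the half-circles meet the straight segments. Since there are $O(|E|)$ objects, there are $O(|E|^2)$ pairs and $O(1)$ intersections per pair, giving $|D_r(G)| = O(|E|^2)$; the whole set is clearly computable in polynomial time and space.

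For the conversion I would process the points of a feasible solution $C$ one at a time. Fix $c \in C$ and let $\mathcal{N}_c = \{N \in \mathcal{N}_r(E) : c \in N\}$ be the objects stabbed by $c$. Each member of $\mathcal{N}_c$ is convex and contains $c$, so the intersection $K_c = \bigcap_{N \in \mathcal{N}_c} N$ is a nonempty bounded convex set containing $c$. The key observation is that any point of $K_c \cap D_r(G)$ still stabs every object of $\mathcal{N}_c$, so it suffices to exhibit one such point and replace $c$ by it. To find one, I would first move $c$ along an arbitrary direction inside $K_c$ until it meets the boundary $\partial N_1$ of some $N_1 \in \mathcal{N}_c$; boundedness of $K_c$ guarantees this happens. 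I would then slide the point along the one-dimensional curve $\partial N_1 \cap K_c$ until it either reaches $\partial N_2$ for a second object $N_2 \in \mathcal{N}_c$, landing on a pairwise boundary intersection point of $D_r(G)$, or (when no second boundary is met, e.g.\ when $\mathcal{N}_c$ is a singleton) reaches a corner point of $N_1$, which also lies in $D_r(G)$. Throughout, the point stays inside $K_c$, so feasibility for $\mathcal{N}_c$ is preserved.

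Performing this replacement for every $c \in C$ yields a set $D \subset D_r(G)$ that stabs all objects, hence is feasible for {\sc IPGD}, with $|D| \leq |C|$, since each $c$ is replaced by a single point and coincidences only decrease the count. As $|C|$ may be assumed at most $|E|$ (assigning one center per edge is trivially feasible) and each point is processed using the $O(|E|)$-object arrangement in polynomial time, the entire conversion runs in polynomial time and space in $|E|$.

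The main obstacle I anticipate is the careful justification of the sliding step: one must verify that moving inside the convex region $K_c$ and then along a single boundary arc always terminates at a vertex of the arrangement or a designated corner, and that the degenerate cases---$c$ already lying on one or several boundaries, or $\mathcal{N}_c$ being a singleton---are absorbed by the inclusion of the corner points in $D_r(G)$. The convexity and boundedness of $K_c$, together with the piecewise-algebraic structure of the stadium boundaries, are precisely what make this termination argument go through.
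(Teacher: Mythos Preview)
Your proposal is correct and follows essentially the same approach as the paper. The paper treats this lemma as folklore and gives only the one-paragraph sketch preceding the statement---namely, that the intersection of any subfamily of objects from $\mathcal{N}_r(E)$, if nonempty, contains a boundary--boundary intersection point of two objects, so feasible points may be restricted to that $O(|E|^2)$ set; your write-up simply spells out the sliding argument in detail and adds the corner points to handle singleton $\mathcal{N}_c$, a degenerate case the paper's sketch glosses over.
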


\subsubsection{Polynomially solvable case of {\sc IPGD}.}
In distinction
to the cases where either $r\in [d_{\min},d_{\max}]$ or $r=\Theta(d_{\max})$
the {\sc IPGD} problem is polynomially solvable for $r=\Omega(R(E))$
where $R(E)$ is the smallest radius of the disk that intersects all segments from $E.$
Due to \cite{journals/computing/BhattacharyaJMR94} the {\sc IPGD} problem is solvable in $O(|E|)$ time
within the class of plane graphs for which the inequality $r\geq R(E)$ holds uniformly.

Let us consider the {\sc IPGD} problem within the class of plane graphs for which the inequality $r\geq\eta R(E)$ 
holds uniformly for some fixed constant $0<\eta<1.$
Since every radius $r$ disk contains an axis-parallel rectangle whose side is equal to $r\sqrt{2},$
roughly at most $\left\lceil\frac{\sqrt{2}R(E)}{r}\right\rceil^2\leq\left\lceil\frac{\sqrt{2}}{\eta}\right\rceil^2=k(\eta)=k$
radius $r$ disks are needed to intersect all segments from $E.$ Therefore the brute-force
search algorithm could be applied that just sequentially tries
each subset of $D_r(G)$ of cardinality at most $k.$ This amounts roughly to
$O\left(k^2|E|^{2k+1}\right)$ time complexity. Thus, we arrive at the polynomial time
algorithm whose complexity depends exponentially on $1/\eta.$
This algorithm gives an optimal solution to the {\sc IPGD} problem taking the lemma \ref{koba123} into account.

\subsection{Approximation algorithm for the {\sc IPGD} problem}

Below the approximation algorithm is reported for the {\sc IPGD} problem
whose approximation factor depends on the maximum aspect ratio among objects
of ${\cal{N}}_r(E).$ More specifically, let us focus on the case
of {\sc IPGD} where the inequality
$r\geq\frac{d_{\max}}{2\lambda}$ holds uniformly within some class ${\cal{G}}_{\lambda}$ of simple plane graphs
for a constant $\lambda>0.$
It corresponds to the
situation where objects from the system ${\cal{N}}_r(E)$ have their aspect ratio bounded from above by $1+\lambda.$
In this case it turns out that the problem admits an $O(1)$-approximation algorithm whose
factor depends on $\lambda.$ The following auxiliary problem is considered
to formulate it.

\noindent {\sc Cover endpoints of segments with disks (CESD).} Let $S(G)\subseteq V$
be the set of endpoints of edges of $G.$ It is required to find the smallest cardinality set
of radius $r$ disks whose union contains $S(G).$

\noindent {\sc Algorithm.} Compute and output $8$-approximate solution to the {\sc CESD} problem
using $O(|E|\log\,{OPT_{CESD}(S(G),r)})$-time algorithm (see sections 2 and 4 from \cite{journals/ipl/Gonzalez91}).

We call a subset $V'\subseteq V$ by a {\it vertex cover} for $G=(V,E)$
when $e\cap V'\neq\varnothing$ for any $e\in E.$
The statement below bounds the ratio of optima for {\sc CESD} and
{\sc IPGD} problems in the general case where $S(G)$ is an arbitrary vertex cover
of the graph $G.$

\begin{statement}
The following bound holds true for any graph $G\in{\cal{G}}_{\lambda}$
without isolated vertices:
$$\frac{OPT_{CESD}(S(G),r)}{OPT_{IPGD}(G,r)}\leq p(1+2\lambda)$$
where $p(x)$ is
the smallest number of unit disks needed to cover radius $x$ disk.
\end{statement}
\begin{proof}
Let $C_0=C_0(G,r)\subset\mathbb{R}^2$ be an optimal solution to {\sc IPGD}
for a given $G\in{\cal{G}}_{\lambda}.$
Set $E(c,G):=\{e\in E \colon c\in N_r(e)\},\,c\in C_0.$
For every $e\in E(c,G)$ there is a point $c(e)\in e$ with $|c-c(e)|_2\leq r.$
Any point from the set $S(c,G)$ of endpoints of segments from $E(c,G)$
is within the distance $r+d_{\max}$ from the point
$c.$
Due to definition of $p,$ at most $p(1+2\lambda)$ radius $r$ disks are needed
to cover содержит radius $r+d_{\max}$ disk.
Therefore the set $S(G)\subseteq \bigcup\limits_{c\in C_0}S(c,G)$ is contained in the union of at most $|C_0|p(1+2\lambda)$ radius $r$
disks.
\end{proof}

\begin{corollary}
The algorithm is $8p(1+2\lambda)$-approximate.
\end{corollary}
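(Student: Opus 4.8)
The plan is to read the algorithm's output --- a collection of radius-$r$ disks covering $S(G)$ --- as a set of disk centers feasible for {\sc IPGD}, and then to multiply the two approximation bounds that are already available. Concretely, I would let $ALG$ denote the number of disks the algorithm produces and argue that the disk \emph{centers} form an {\sc IPGD}-feasible set $C$ of the same cardinality $ALG,$ after which the factor follows by combining the $8$-approximation of the {\sc CESD} subroutine with the inequality proved in the Statement above.

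The first and only substantive step is this feasibility transfer. Since $G$ has no isolated vertices, every edge $e=[u,v]\in E$ has both endpoints $u,v$ in $S(G).$ Any feasible {\sc CESD} solution covers $S(G),$ so at least one endpoint of $e,$ say $u,$ lies in some disk $B_r(c)$ of the solution. Then $|u-c|_2\leq r$ while $u\in e,$ so the radius-$r$ disk centered at $c$ meets $e,$ i.e. $c\in N_r(e).$ Ranging over all edges shows that the set $C$ of centers of the output disks stabs every edge of $G,$ hence $C$ is feasible for {\sc IPGD} and $|C|=ALG.$

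It remains to chain the numerical estimates. The subroutine returns an $8$-approximate {\sc CESD} solution, so $ALG\leq 8\,OPT_{CESD}(S(G),r);$ the Statement above gives $OPT_{CESD}(S(G),r)\leq p(1+2\lambda)\,OPT_{IPGD}(G,r).$ Composing the two yields $ALG\leq 8p(1+2\lambda)\,OPT_{IPGD}(G,r),$ which together with the feasibility of $C$ proves the claimed approximation factor. The arithmetic here is routine; the point that genuinely needs justification --- and the only place the hypotheses enter --- is the feasibility transfer, where one observes that covering every endpoint is a \emph{stronger} requirement than stabbing every edge, so that an approximate endpoint cover is automatically an {\sc IPGD}-feasible center set with no loss in size.
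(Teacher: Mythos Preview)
Your argument is correct and is precisely the implicit reasoning behind the paper's corollary, which the paper states without proof as an immediate consequence of the preceding Statement together with the $8$-approximation guarantee of the {\sc CESD} subroutine. The feasibility transfer you spell out --- that any set of radius-$r$ disks covering $S(G)$ has centers stabbing every edge --- is exactly the missing link the reader is expected to supply; note that the assumption ``$G$ has no isolated vertices'' is in fact unnecessary here, since $S(G)$ is \emph{defined} as the set of edge endpoints, so every edge automatically has both endpoints in $S(G)$.
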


\begin{remark}
Approximation factor of the algorithm is in fact lower when ${\cal{G}}_{\lambda}$ is the
subclass of Delaunay triangulations or of their subgraphs. Indeed, in this case there is no need
to cover the whole radius $r+d_{\max}$ disk with radius $r$ disks.
\end{remark}

\begin{remark}
If $S(G)$ is the set of midpoints of segments from $E,$ the algorithm
is $8p(1+\lambda)$-approximate.
\end{remark}

\section{Conclusion}

Complexity and approximability are studied for the problem of intersecting
a structured set of straight line segments with the smallest number of
disks of radii $r>0$ where a structural information about segments
is given in the form of an edge set of a plane graph.
It is shown that the problem is strongly NP-hard within the class of Delaunay triangulations
and some of their subgraphs for small and medium values of $r$ while for large $r$
it is polynomially solvable. Fast approximation algorithm is given for the {\sc IPGD} problem
whose approximation factor depends on the maximum aspect ratio among objects from ${\cal{N}}_r(E).$
Of course,
those algorithms are of particular interest whose factor is bounded from above
by some absolute constant. This sort
of algorithms is our special
focus for future research.

\appendix

\section{Proof of the Lemma \ref{kobylkinks1234321}}

\begin{proof}
Let $u=(x,y),\,v=(x_1,y_1)$ and $w=(x_2,y_2)$ be distinct points of $X.$
Consider an arbitrary radius $r$ circle (out of two circles) which passes through
$v$ and $w,$ and denote its center by $O.$
A lower bound is obtained below for the distance
$\pi=\pi(u;v,w)$ from that circle to the point $u\notin C(v,w).$

Let $\Delta=|v-w|_2,$
$\lambda=\sqrt{r^2-\frac{  \Delta^2}{  4}}$,
$a=(u-v,u-w)$ and $b=(u-v,(v-w)^{\perp})$, where $(v-w)^{\perp}=\pm (y_1-y_2,-x_1+x_2)$.
The distance $\pi>0$ can be written in the form:
$$
\pi=\pi(u;v,w)=\left|\left|\frac{v+w}{2}-\lambda\frac{(v-w)^{\perp}}{|v-w|_2}-u\right|_2-r\right|
=\left|\frac{a+\frac{  {2\lambda b}}{  \Delta}}
 {\sqrt{a+\frac{  {2\lambda b}}{  \Delta}+r^2}+r}\right|.
$$
Without loss of generality it assumed that
$u$ is in the $2r$ radius disk centered at $O$.
Indeed, otherwise $\pi\geq r\geq\frac{  1}{  r}$.
Let us bound denominator of fraction $\pi,$
taking into account that $\Delta\leq 2r,$ $|u-v|_2\leq |u-O|_2+|O-v|_2\leq 3r$ and $|b|/\Delta\leq 3r:$
$$\sqrt{a+\frac{2\lambda b}{\Delta}+r^2}+r\leq 5r.$$
As points of $X$ have integer coordinates, $a$ and $b$ are integers.
For $\Delta^2=4r^2$ we get $\pi\geq\frac{  1}{  5r}$.
When $\Delta^2\leq 4r^2-1$ it is enough to prove the inequality
$$\left|a+\frac{2\lambda b}{\Delta}\right|\geq\frac{1}{96r^4}.$$
Indeed, again, combining this bound with the aforementioned upper bound for denominator
of the fraction $\pi,$ we get $\pi\geq\frac{1}{480r^5}$.

For integer $\frac{2\lambda b}{\Delta}$ the left-hand side of the inequality
is at least 1. Thus, it remains for us to prove the inequality for the case where 
$\frac{2\lambda b}{\Delta}\notin\mathbb{Z}$.
Suppose that $q=\left\{\left|\frac{2\lambda b}{\Delta}\right|\right\}>0$ and
$k=\left[\left|\frac{2\lambda b}{\Delta}\right|\right],$ where $\{\cdot\}$ and
$[\cdot]$ denote fractional and integer part of real number respectively.
In fact, the term $\min\{q,1-q\}$ is bounded from below.
Let us start estimating with $q.$
First, it is assumed that $\gamma=\frac{  4r^2b^2}{  \Delta^2}\in~\mathbb{Z}.$
We have $k^2<\frac{  4\lambda^2b^2}{  \Delta^2}<(k+1)^2.$
As $q>0,$ we get $q\geq\left\{\sqrt{k^2+1}\right\}.$
Due to concavity of the square root we have
$$
\left\{\sqrt{k^2+1}\right\}
=\left\{\sqrt{\frac{2k\cdot k^2}{2k+1}+
\frac{(k+1)^2}{2k+1}}\right\}
\geq\left\{k+\frac{1}{2k+1}\right\}
=\frac{1}{2k+1}
\geq$$$$\geq\frac{1}{\frac{  4\lambda |b|}{  \Delta}+1}
\geq\frac{  1}{  13r^2}.
$$

Now the case is considered where
$\gamma\notin\mathbb{Z}$.
As $2kq+q^2\geq \{2kq+q^2\}=\{\gamma\}$, we have that
$$
q\geq \sqrt{k^2+\{\gamma\}}-k\geq\frac{\{\gamma\}}
{\sqrt{k^2+\{\gamma\}+k}}
\geq\frac{\frac{  1}{  \Delta^2}}{\frac{  4r|b|}{  \Delta}}
\geq \frac{1}{12r^2\Delta^2}\geq \frac{1}{48r^4}.
$$

Let us get a lower bound for $1-q.$
Again, assume that $\gamma\in\mathbb{Z}$.
Arguing analogously, we arrive at the bound
$$
2k(1-q)+(1-q)^2\geq\{(k+1-q)^2\}
=\left\{(k+1)^2-\frac{4\lambda^2b^2}{\Delta^2}-2q(1-q)\right\}\geq\frac{1}{2}.
$$
Resolving the quadratic inequality with respect to
$1-q$, we get:
$$
1-q\geq\sqrt{k^2+\frac{1}{2}}-k
=\frac{\frac{1}{2}}{\sqrt{k^2+\frac{1}{2}}+k}
\geq\frac{1}{\frac{  8r|b|}{  \Delta}}\geq\frac{1}{24r^2}.$$
Now let $\gamma\notin\mathbb{Z}$.
Let us consider the subcase where $\{\gamma\}+2q(1-q)>1$.
We get
$$
\left\{(k+1)^2-\frac{4\lambda^2b^2}{\Delta^2}-2q(1-q)\right\}\geq 1-\{\gamma\}\geq\frac{1}{\Delta^2}.
$$
Resolving the corresponding inequality with respect to $1-q$,
we arrive at the analogous lower bound $1-q\geq\frac{1}{48r^4}$.

Now we are to address the case where $\{\gamma\}+2q(1-q)<1$. Obviously
$$
\left\{(k+1)^2-\frac{4\lambda^2b^2}{\Delta^2}-2q(1-q)\right\}
=1-\{\gamma\}-2q(1-q).
$$
For $1-q<\frac{  1}{  4\Delta^2}$ we have
$1-\{\gamma\}-2q(1-q)
\geq\frac{  1}{  \Delta^2}-\frac{  1}{  2\Delta^2}
=\frac{  1}{  2\Delta^2}.$
Arguing analogously, we obtain the following bound $1-q\geq\frac{  1}{  96r^4}$;
otherwise
$1-q\geq\frac{  1}{  4\Delta^2}\geq\frac{  1}{  16r^2}$.

For $\{\gamma\}+2q(1-q)=1$ we have:
$$
1-q=\frac{1-\{\gamma\}}{  2q}
\geq \frac{1-\{\gamma\}}{2}\geq \frac{1}{2\Delta^2}\geq\frac{1}{8r^2}.
$$
Finally, we get the claimed bound $\min\{q,1-q\}\geq\frac{  1}{  96r^4}$.
\end{proof}

\end{document}